\def\paren#1{\left( #1 \right)}
\def\acc#1{\left\{ #1 \right\}}
\renewcommand{\le}{\leqslant}
\renewcommand{\ge}{\geqslant}
\newcommand{\arxiv}[1]{\href{http://arxiv.org/abs/#1}{\texttt{arXiv:#1}}}
\newtheorem{theorem}{Theorem}
\newtheorem{corollary}[theorem]{Corollary}
\newtheorem{lemma}[theorem]{Lemma}
\newtheorem{conjecture}[theorem]{Conjecture}
\begin{document}
\begin{frontmatter}
\title{Avoidability of circular formulas}
\author[LIRMM]{Guilhem Gamard}
\author[LIRMM,CNRS]{Pascal Ochem}
\author[LIRMM,um3]{Gwena\"el Richomme}
\author[LIRMM,um3]{Patrice S\'e\'ebold}
\address[LIRMM]{LIRMM, Universit\'e de Montpellier and CNRS, France}
\address[CNRS]{CNRS}
\address[um3]{Universit\'e Paul-Val\'ery Montpellier 3}

\begin{abstract}
Clark has defined the notion of $n$-avoidance basis which contains the avoidable formulas
with at most $n$ variables that are closest to be unavoidable in some sense.
The family $C_i$ of circular formulas is such that $C_1=AA$, $C_2=ABA.BAB$, $C_3=ABCA.BCAB.CABC$ and so on.
For every $i\le n$, the $n$-avoidance basis contains $C_i$.
Clark showed that the avoidability index of every circular formula and of every formula in
the $3$-avoidance basis (and thus of every avoidable formula containing at most 3 variables) is at most 4.
We determine exactly the avoidability index of these formulas.
\end{abstract}

\end{frontmatter}

\section{Introduction}\label{sec:intro}
A \emph{pattern} $p$ is a non-empty finite word over an alphabet
$\Delta=\acc{A,B,C,\dots}$ of capital letters called \emph{variables}.
An \emph{occurrence} of $p$ in a word $w$ is a non-erasing morphism $h:\Delta^*\to\Sigma^*$
such that $h(p)$ is a factor of $w$.
The \emph{avoidability index} $\lambda(p)$ of a pattern $p$ is the size of the
smallest alphabet $\Sigma$ such that there exists an infinite word
over $\Sigma$ containing no occurrence of $p$.
Bean, Ehrenfeucht, and McNulty~\cite{BEM79} and Zimin~\cite{Zimin}
characterized unavoidable patterns, i.e., such that $\lambda(p)=\infty$.
We say that a pattern $p$ is \emph{$t$-avoidable} if $\lambda(p)\le t$.
For more informations on pattern avoidability, we refer to Chapter 3 of Lothaire's book~\cite{Lothaire2002}.
See also this book for basic notions in Combinatorics on Words.

A variable that appears only once in a pattern is said to be \emph{isolated}.
Following Cassaigne~\cite{Cassaigne1994}, we associate to a pattern $p$ the \emph{formula} $f$
obtained by replacing every isolated variable in $p$ by a dot.
The factors between the dots are called \emph{fragments}.

An \emph{occurrence} of a formula $f$ in a word $w$ is a non-erasing morphism $h:\Delta^*\to\Sigma^*$
such that the $h$-image of every fragment of $f$ is a factor of $w$.
As for patterns, the avoidability index $\lambda(f)$ of a formula $f$ is the size of the
smallest alphabet allowing the existence of an infinite word containing no occurrence of $f$.
Clearly, if a formula $f$ is associated to a pattern $p$,
every word avoiding $f$ also avoids $p$, so $\lambda(p)\le\lambda(f)$.
Recall that an infinite word is \emph{recurrent} if every finite factor appears
infinitely many times.
If there exists an infinite word over $\Sigma$ avoiding $p$,
then there exists an infinite recurrent word over $\Sigma$ avoiding $p$.
This recurrent word also avoids $f$, so that $\lambda(p)=\lambda(f)$.
Without loss of generality, a formula is such that no variable is isolated
and no fragment is a factor of another fragment.

Cassaigne~\cite{Cassaigne1994} began and Ochem~\cite{Ochem2004}
finished the determination of the avoidability index of every pattern with at most 3 variables.
A \emph{doubled} pattern contains every variable at least twice.
Thus, a doubled pattern is a formula with exactly one fragment.
Every doubled pattern is 3-avoidable~\cite{O16}.
A formula is said to be \emph{binary} if it has at most 2 variables.
The avoidability index of every binary formula has been recently determined~\cite{OchemRosenfeld2016}.
We say that a formula $f$ is \emph{divisible} by a formula $f'$ if $f$ does not avoid $f'$,
that is, there is a non-erasing morphism $h$ such that the image of every fragment of $f'$ by $h$ is a factor of a fragment of $f$.
If $f$ is divisible by $f'$, then every word avoiding $f'$ also avoids $f$ and thus $\lambda(f)\le\lambda(f')$.
Moreover, the reverse $f^R$ of a formula $f$ satisfies $\lambda(f^R)=\lambda(f)$.
For example, the fact that $ABA.AABB$ is 2-avoidable implies that $ABAABB$ and $BAB.AABB$ are 2-avoidable.
See Cassaigne~\cite{Cassaigne1994} and Clark~\cite{Clark} for more information on formulas and divisibility.

Clark~\cite{Clark} has introduced the notion of \emph{$n$-avoidance basis} for formulas,
which is the smallest set of formulas with the following property:
for every $i\le n$, every avoidable formula with $i$ variables is divisible by at least one formula
with at most $i$ variables in the $n$-avoidance basis.

From the definition, it is not hard to obtain that the $1$-avoidance basis is $\acc{AA}$ and the $2$-avoidance basis is $\acc{AA, ABA.BAB}$.
Clark obtained that the $3$-avoidance basis is composed of the following formulas:
\begin{itemize}
 \item $AA$
 \item $ABA.BAB$
 \item $ABCA.BCAB.CABC$
 \item $ABCBA.CBABC$
 \item $ABCA.CABC.BCB$
 \item $ABCA.BCAB.CBC$
 \item $AB.AC.BA.CA.CB$
\end{itemize}

The following properties of the avoidance basis are derived.
\begin{itemize}
 \item The $n$-avoidance basis is a subset of the $(n+1)$-avoidance basis.
 \item The $n$-avoidance basis is closed under reverse. (In particular, $ABCA.BCAB.CBC$ is the reverse of $ABCA.CABC.BCB$.)
 \item Two formulas in the $n$-avoidance basis with the same number of variables are incomparable by divisibility. (However, $AA$ divides $AB.AC.BA.CA.CB$.)
 \item The $n$-avoidance basis is computable.
\end{itemize}

The \emph{circular formula} $C_t$ is the formula over $t\ge1$ variables $A_0,\ldots,A_{t-1}$ containing
the $t$ fragments of the form $A_iA_{i+1}\ldots A_{i+t}$ such that the indices are taken modulo $t$.
Thus, the first three formulas in the $3$-avoidance basis, namely $C_1=AA$, $C_2=ABA.BAB$, and $C_3=ABCA.BCAB.CABC$,
are also the first three circular formulas.
More generally, for every $t\le n$, the $n$-avoidance basis contains $C_t$.

It is known that $\lambda(AA)=3$~\cite{Thue06}, $\lambda(ABA.BAB)=3$~\cite{Cassaigne1994}, and $\lambda(AB.AC.BA.CA.CB)=4$~\cite{BNT89}.
Actually, $AB.AC.BA.CA.CB$ is avoided by the fixed point $b_4=0121032101230321\dots$ of the morphism given below.
$$\begin{array}{l}
\texttt{0}\mapsto\texttt{01}\\
\texttt{1}\mapsto\texttt{21}\\
\texttt{2}\mapsto\texttt{03}\\
\texttt{3}\mapsto\texttt{23}\\
\end{array}$$
Clark~\cite{Clark} obtained that $b_4$ also avoids $C_i$ for every $i\ge1$, so that $\lambda(C_i)\le4$ for every $i\ge1$.
He also showed that the avoidability index of the other formulas in the $3$-avoidance basis is at most~$4$.
Our main results finish the determination of the avoidability index of the circular formulas (Theorem~\ref{thm:circ}) and the formulas in the $3$-avoidance basis (Theorem~\ref{thm:two}).

\section{Conjugacy classes and circular formulas}
In this section, we determine the avoidability index of circular formulas.
\begin{theorem}~\label{thm:circ}
$\lambda(C_3)=3$. $\forall i\ge4$, $\lambda(C_i)=2$.
\end{theorem}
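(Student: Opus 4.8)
The statement splits into four bounds: the two lower bounds $\lambda(C_3)\ge3$ and $\lambda(C_i)\ge2$, and the two upper bounds $\lambda(C_3)\le3$ and $\lambda(C_i)\le2$ for $i\ge4$. The bound $\lambda(C_i)\le4$ is already available from the word $b_4$, so the real work is to sharpen the upper bounds and match them with lower bounds. The lower bound $\lambda(C_i)\ge2$ is immediate: over a one-letter alphabet the only infinite word is $aaa\cdots$, and sending every variable to $a$ turns each fragment into $a^{i+1}$, which occurs, so no unary word avoids $C_i$. For $\lambda(C_3)\ge3$ I would show that no infinite binary word avoids $C_3$, equivalently that the set of binary words avoiding $C_3$ is finite. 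Since this set is factorial, finiteness can be certified by a backtracking search: enumerate the binary words avoiding $C_3$ and exhibit a length beyond which none survive, so that by König's lemma no infinite avoiding word exists.

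For the upper bounds I would use the standard strategy for morphic words, but first reinterpret an occurrence of $C_t$ through conjugacy, as the section title suggests. Writing $u_j=h(A_j)$ and $U=u_0u_1\cdots u_{t-1}$, the fragment $A_jA_{j+1}\cdots A_{j+t}$ maps to $u_ju_{j+1}\cdots u_{j+t-1}u_j$, which is the block-conjugate $U_j=u_ju_{j+1}\cdots u_{j+t-1}$ of $U$ followed by its own first block $u_j$. Thus an occurrence of $C_t$ in a word $w$ is exactly a factorization $U=u_0\cdots u_{t-1}$ into nonempty blocks such that every extended conjugate $U_ju_j$ is a factor of $w$; in particular each $U_ju_j$ is a fractional power of $U_j$ of exponent $1+\abs{u_j}/\abs{U}>1$. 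This reformulation lets me control occurrences inside the fixed point of a uniform morphism: a synchronization argument should show that if an occurrence exists then one exists with all $\abs{u_j}$ below an explicit constant, reducing avoidance to checking finitely many candidate morphisms $h$.

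For $\lambda(C_3)\le3$ I would exhibit a suitable uniform morphism on three letters, take its fixed point $w_3$, and verify---via the reduction above together with a finite computation over short variable images---that $w_3$ contains no occurrence of $C_3$. The same scheme is meant to give $\lambda(C_i)\le2$ for $i\ge4$, and this is where the difficulty is genuine: I must produce binary word(s) and establish avoidance for the entire infinite family $i\ge4$ at once, whereas there is no obvious divisibility reducing $C_i$ to $C_4$. I expect this uniformity to be the main obstacle.

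Here the conjugacy reformulation does the real work. An occurrence of $C_i$ requires all $i$ block-conjugates of a single word $U$ to occur, each extended by its first block---a rigid, overdetermined condition when $i$ is large. The plan is to choose the binary word so as to forbid an appropriate repetition (a fractional power of a suitable threshold exponent), and then argue by a dichotomy on $U$: if $U$ is close to periodic, requiring all extended conjugates to appear forces a long periodic factor, while if $U$ is primitive it forces too many distinct long factors lying in one conjugacy class; both are excluded once $i$ is large. This caps the values of $i$ for which $C_i$ can occur, and the finitely many small values $i=4,5,\dots$ below that cap are then dispatched by the synchronization-plus-finite-check argument. Making this dichotomy precise and uniform in $U$---translating the fractional-power structure of the fragments $U_ju_j$ into the single forbidden repetition built into the binary word---is the crux of the whole theorem.
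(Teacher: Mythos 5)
Your lower bounds are fine (the backtracking certificate for $\lambda(C_3)\ge3$ is exactly what is needed, and the paper leaves it implicit), and your conjugacy reformulation of an occurrence of $C_t$ --- writing $U=u_0u_1\cdots u_{t-1}$ and observing that every extended block-conjugate $U_ju_j$, hence the entire conjugacy class of $U$, must occur --- is correct and is precisely the paper's starting point. The genuine gap is at the crux you yourself flag: the uniform treatment of all $i\ge4$. Your plan to translate the fractional-power structure of the fragments into ``a single forbidden repetition built into the binary word'' cannot work, because the fragment $U_ju_j$ has exponent $1+\abs{u_j}/\abs{U}$, and since $\sum_j\abs{u_j}=\abs{U}$ the best exponent guaranteed by any fragment is $1+1/i$, which tends to $1$ as $i$ grows. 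No infinite binary word is $\beta^+$-free for $\beta<2$ (the binary repetition threshold is $2$), so no repetition constraint available over $2$ or even $3$ letters excludes occurrences with long $U$ and comparatively short blocks; this is exactly why the paper uses the repetition-plus-inequalities method only in Section~3, for the non-circular basis formulas over larger alphabets. The other horn of your dichotomy is equally unsupported: morphic binary words have linear factor complexity, which does not prevent all $\abs{U}$ conjugates of a primitive $U$ from appearing. So your dichotomy does not cap $i$, and the fallback ``synchronization shrinks an occurrence to one with bounded $\abs{u_j}$'' is also not a theorem --- synchronization lets you pull a long structured factor back through the morphism, not shrink it.

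The idea you are missing, which is what makes the paper's argument uniform in $i$, is to avoid conjugacy classes outright and pull long ones back to $b_4$. The paper shows (Lemma~\ref{L:1}) that $g_2(b_4)$ avoids every conjugacy class of length at least $5$ and $g_3(b_4)$ every one of length at least $3$: short classes are excluded by a finite computer check, and for a long class, synchronization of the uniform morphism forces some conjugate to equal $g_2(\ell_1\cdots\ell_k)$, whereupon the remaining block-conjugates force $b_4$ to contain $\ell_r\cdots\ell_k\ell_1\cdots\ell_r$ for every $r$, i.e., an occurrence of $C_k$ in $b_4$ --- contradicting Clark's theorem that $b_4$ avoids \emph{all} circular formulas. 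That single appeal to a word already known to avoid the whole infinite family is what discharges the uniformity in $i$ that your outline leaves open. Since an occurrence of $C_i$ yields a conjugacy class of length at least $i$, this gives $\lambda(C_3)\le3$ immediately, and $\lambda(C_i)\le2$ for $i\ge4$ after one residual finite check that your sketch would also need: occurrences of $C_4$ in $g_2(b_4)$ with all variable images of length $1$ produce a conjugacy class of length exactly $4$, which the lemma does not cover. Without an argument of this pullback type (or some substitute giving avoidance of all long conjugacy classes at once), your proposal does not close.
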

We consider a notion that appears to be useful in the study of circular formulas.
A \emph{conjugacy class} is the set of all the conjugates of a given word, including the word itself.
The length of a conjugacy class is the common length of the words in the conjugacy class.
A word contains a conjugacy class if it contains every word in the conjugacy class as a factor.
Consider the uniform morphisms given below.

\noindent
\begin{minipage}[b]{0.46\linewidth}
\centering
$$\begin{array}{l}
g_2(\texttt{0})=\texttt{0000101001110110100}\\
g_2(\texttt{1})=\texttt{0011100010100111101}\\
g_2(\texttt{2})=\texttt{0000111100010110100}\\
g_2(\texttt{3})=\texttt{0011110110100111101}\\
\end{array}$$
\end{minipage}
\begin{minipage}[b]{0.24\linewidth}
\centering
$$\begin{array}{l}
g_3(\texttt{0})=\texttt{0010}\\
g_3(\texttt{1})=\texttt{1122}\\
g_3(\texttt{2})=\texttt{0200}\\
g_3(\texttt{3})=\texttt{1212}\\
\end{array}$$
\end{minipage}
\begin{minipage}[b]{0.28\linewidth}
\centering
$$\begin{array}{l}
g_6(\texttt{0})=\texttt{01230}\\
g_6(\texttt{1})=\texttt{24134}\\
g_6(\texttt{2})=\texttt{52340}\\
g_6(\texttt{3})=\texttt{24513}\\
\end{array}$$
\end{minipage}
\\

\begin{lemma}\label{L:1}{\ }
\begin{itemize}
 \item The word $g_2(b_4)$ avoids every conjugacy class of length at least $5$.
 \item The word $g_3(b_4)$ avoids every conjugacy class of length at least $3$.
 \item The word $g_6(b_4)$ avoids every conjugacy class of length at least $2$.
\end{itemize}
\end{lemma}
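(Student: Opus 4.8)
The plan is to treat the three items in a uniform way. Each target word has the shape $g(b_4)$ for a uniform morphism $g$ (of length $19$, $4$, and $5$, respectively), where $b_4=\mu^\omega(0)$ for the length-$2$ uniform morphism $\mu$ fixing it. Since these words are aperiodic and uniformly recurrent, the heart of the matter is that they cannot contain \emph{arbitrarily long} conjugacy classes; the exact thresholds $5$, $3$, $2$ then come from a finite verification on a long prefix. So I would first reduce ``avoids every conjugacy class of length $\ge L$'' to ``avoids every conjugacy class of length in a bounded range'', and then dispatch that range by a finite computation over the factor set. The engine of the reduction is a desubstitution principle for conjugacy classes through a uniform morphism.

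Concretely, I would prove the following \emph{desubstitution lemma}: if $\phi$ is a uniform morphism of length $\ell$ that is injective on letters and recognizable on the word it is applied to (so there is a synchronization delay $d$ such that any factor of length $\ge d$ has a unique alignment to the blocks $\phi(a)$), and if $\phi(z)$ contains the full conjugacy class of a primitive word $u$ with $|u|=n\ge d$, then $\ell\mid n$ and $z$ contains the full conjugacy class of a primitive word of length $n/\ell$. The divisibility $\ell\mid n$ is the crux, and I would obtain it by a phase argument around the cyclic word $\tilde u$ underlying the class. Fix occurrences of the conjugates $u^{(0)},\dots,u^{(n-1)}$ in $\phi(z)$; by recognizability each occurrence has a well-defined cut-point phase $c_i\in\acc{0,\dots,\ell-1}$. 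Consecutive conjugates $u^{(i)}$ and $u^{(i+1)}$ share a factor of length $n-1\ge d$, so recognizability forces their cut-points to agree on the overlap, giving $c_{i+1}\equiv c_i-1\pmod{\ell}$. Traversing the cycle once, $i\mapsto i+n$ returns to the same conjugate, so the accumulated shift $-n$ must vanish modulo $\ell$, i.e.\ $\ell\mid n$. The $n/\ell$ blocks cut out of $\tilde u$ then desubstitute (using injectivity of $\phi$) to a cyclic word $\tilde u'$ of length $n/\ell$, all of whose conjugates appear in $z$; primitivity of $u$ forces primitivity of $u'$.

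With the lemma available I would assemble the proof by descent. Applied with $\phi=\mu$ and $z=b_4$ (legitimate since $b_4=\mu(b_4)$), it yields: any conjugacy class in $b_4$ of length $n\ge d_\mu$ has even length and descends to one of length $n/2\ge d_\mu$. Halving stays above $d_\mu$ until the length first enters the window $[d_\mu,2d_\mu)$, and an odd length $\ge d_\mu$ is excluded outright; hence a finite verification that $b_4$ contains no conjugacy class of length in $[d_\mu,2d_\mu)$ propagates upward to show $b_4$ avoids every conjugacy class of length $\ge d_\mu$. Next, applying the lemma with $\phi=g$ and $z=b_4$, a conjugacy class of length $n\ge d_g$ in $g(b_4)$ forces $\ell\mid n$ together with a class of length $n/\ell$ in $b_4$; since the latter must then have length $<d_\mu$, every conjugacy class contained in $g(b_4)$ has length below the explicit bound $\max(d_g,\ell\, d_\mu)$. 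It remains only to check, for each of $g_2,g_3,g_6$, that no conjugacy class of length in $[L,\max(d_g,\ell\,d_\mu))$ is fully contained (inspecting the factors of a sufficiently long prefix), which confirms the thresholds $L=5,3,2$ and, incidentally, that the shorter classes below each threshold do occur.

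The main obstacle is the desubstitution lemma. First, I must establish recognizability with an \emph{explicit} finite delay for $\mu$ and for each $g$: this reduces to verifying that the relevant block set is a code with bounded deciphering delay on the factors of $b_4$ (finite checks, but they must be done carefully for the length-$19$ morphism $g_2$). Second, the phase argument is subtler than for a single long factor, because a conjugacy class is witnessed by $n$ \emph{unrelated} occurrences; I must argue that recognizability pins each conjugate's phase and that gluing these phases around the cycle both forces $\ell\mid n$ and reconstructs a genuine primitive preimage cycle. Finally, I would need to keep the delays $d_\mu,d_g$ and the resulting bound $\max(d_g,\ell\,d_\mu)$ concrete enough, and handle non-primitive classes by reduction to their primitive root, so that the concluding finite check is provably exhaustive rather than merely indicative.
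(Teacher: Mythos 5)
Your desubstitution engine is sound as far as it goes: the synchronization-phase argument around the cyclic word, forcing $\ell\mid n$ and producing a preimage class of length $n/\ell$, is essentially the paper's own synchronization step (its computer check that $g_2(\alpha)$ occurs in $g_2(\beta\gamma)$ only as a prefix or suffix plays the role of your recognizability delay). The fatal gap is in how you terminate: the intermediate claim that $b_4$ itself avoids every sufficiently long conjugacy class is \emph{false}, so the descent through $\mu$ cannot be closed. Concretely, with $\mu(0)=01$, $\mu(1)=21$, $\mu(2)=03$, $\mu(3)=23$, every even position of $b_4$ carries a letter of $\{0,2\}$ and every odd position a letter of $\{1,3\}$, so the synchronization delay is $d_\mu=2$; your window verification on $[d_\mu,2d_\mu)$ then fails at once, since $b_4=0121\cdots$ contains both $12$ and $21$, a full conjugacy class of length $2$. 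Nor is this an artifact of a small delay: halving can never eliminate lengths that are powers of two, and such classes really occur. Inspecting the prefix $\mu^5(0)=01210321012303210121032301230321$, all four conjugates of $0123$ are factors of $b_4$ ($0123$ and $1230$ appear directly, $2301=\mu(30)$ with $30$ a factor, and $3012$ sits inside $\mu(303)$), and one checks in the same way that all eight conjugates of $\mu(0123)=01210323$ are factors. So $b_4$ contains conjugacy classes of lengths $2$, $4$, $8$, the inference ``the descended class in $b_4$ must have length $<d_\mu$'' underlying your bound $\max(d_g,\ell\,d_\mu)$ is unobtainable, and your opening heuristic (aperiodic plus uniformly recurrent words cannot contain long conjugacy classes) is refuted by $b_4$ itself.

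What is missing is precisely the information your lemma throws away: the wrap-around letter. Since recognizability pins the phase of each conjugate, the conjugate cut inside block $r$ determines the \emph{same} preimage letter $\ell_r$ at both of its ends; in the paper this is the observation that the length-$6$ prefixes and length-$12$ suffixes of the $g_2$-blocks are pairwise distinct, so the conjugate starting at offset $7$ of block $r$ forces $\ell_r\cdots\ell_k\ell_1\cdots\ell_r$ to occur in $b_4$. A long conjugacy class in $g(b_4)$ therefore yields not merely the conjugacy class of $\ell_1\cdots\ell_k$ in $b_4$ but an occurrence of the circular formula $C_k$ (whose fragments have length $k+1$), and the contradiction comes from Clark's theorem that $b_4$ avoids every $C_i$ --- a result your proposal never invokes and genuinely needs. (Note that $b_4$ contains the class $\{12,21\}$ but not the $C_2$-fragment $212$: avoiding $C_k$ is strictly stronger than avoiding the corresponding class.) The repair is to strengthen your lemma's conclusion from ``$z$ contains a conjugacy class of length $n/\ell$'' to ``$z$ contains an occurrence of $C_{n/\ell}$'' and cite Clark in place of the $\mu$-descent. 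Separately, your finite checks cannot be done by ``inspecting a sufficiently long prefix'' alone: certifying that a class is \emph{avoided} requires the complete set of bounded-length factors of $g(b_4)$, computed from those of $b_4$ (or an explicit recurrence bound), which is what the paper's computer check does.
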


\begin{proof}
We only detail the proof for $g_2(b_4)$, since the proofs for $g_3(b_4)$ and $g_6(b_4)$ are similar.
Notice that $g_2$ is $19$-uniform.
First, a computer check shows that $g_2(b_4)$ contains no conjugacy class of length $i$ with $5\le i\le55$ (i.e., $2\times19+17$). 

Suppose for contradiction that $g_2(b_4)$ contains a conjugacy class of length at least $56$ (i.e., $2\times19+18$).
Then every element of the conjugacy class contains a factor $g_2(ab)$ with $a,b\in\Sigma_4$.
In particular, one of the elements of the conjugacy class can be written as $g_2(ab)s$.
The word $g_2(b)sg_2(a)$ is also a factor of $g_2(b_4)$.
A computer check shows that for every letters $\alpha$, $\beta$, and $\gamma$ in $\Sigma_4$ such that
$g_2(\alpha)$ is a factor of $g_2(\beta\gamma)$, $g_2(\alpha)$ is either a prefix or a suffix of $g_2(\beta\gamma)$.
This implies that $s$ belongs to $g_2(\Sigma_4^+)$.

Thus, the conjugacy class contains a word $w=g_2(\ell_1\ell_2\ldots\ell_k)=x_1x_2...x_{19k}$.
Consider the conjugate $\tilde{w}=x_7x_8\ldots x_{19k}x_1x_2x_3x_4x_5x_6$.
Observe that the prefixes of length $6$ of $g_2(0)$, $g_2(1)$, $g_2(2)$, and $g_2(3)$ are different.
Also, the suffixes of length $12$ of $g_2(0)$, $g_2(1)$, $g_2(2)$, and $g_2(3)$ are different.
Then the prefix $x_7\ldots x_{19}$ and the suffix $x_1\ldots x_6$ of $\tilde{w}$
both force the letter $\ell_1$ in the pre-image. That is, $b_4$ contains $\ell_1\ell_2\ldots\ell_k\ell_1$.
Similarly, the conjugate of $w$ that starts with the letter $x_{19(r-1)+7}$ implies that $b_4$ contains $\ell_r\ldots\ell_k\ell_1\ldots\ell_r$.
Thus, $b_4$ contains an occurrence of the formula $C_k$.
This is a contradiction since Clark~\cite{Clark} has shown that $b_4$ avoids every circular formula $C_i$ with $i\ge1$.
\end{proof}

Notice that if a word contains an occurrence of $C_i$, then it contains a conjugacy class of length at least $i$. 
Thus, a word avoiding every conjugacy class of length at least $i$ also avoids every circular formula $C_t$ with $t\ge i$.
Moreover, $g_2(b_4)$ contains no occurrence of $C_4$ such that the length of the image of every variable is $1$.
By Lemma~\ref{L:1}, this gives the next result, which proves Theorem~\ref{thm:circ}.
\begin{corollary}\label{cor}
The word $g_3(b_4)$ avoids every circular formula $C_i$ with $i\ge3$.
The word $g_2(b_4)$ avoids every circular formula $C_i$ with $i\ge4$.
\end{corollary}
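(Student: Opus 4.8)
The plan is to derive Corollary~\ref{cor} directly from Lemma~\ref{L:1}, using the observation recorded just before the statement that an occurrence of $C_i$ in a word forces that word to contain a full conjugacy class of length at least $i$. Once that implication is in hand, the two items of the corollary follow by matching the length thresholds of Lemma~\ref{L:1} against the range of indices $i$. The only genuinely delicate point is the borderline case $C_4$ for $g_2(b_4)$: there the induced conjugacy class may have length exactly $4$, which falls just outside the range (length at least $5$) covered by the first item of Lemma~\ref{L:1}, and so must be treated by the separate finite verification mentioned above.

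First I would make the key implication explicit. Given an occurrence $h$ of $C_i$ in a word $w$, set $u=h(A_0)h(A_1)\cdots h(A_{i-1})$; since $h$ is non-erasing, $|u|\ge i$. The image of the fragment $A_jA_{j+1}\cdots A_{j+i}$ (indices mod $i$, so $A_{j+i}=A_j$) equals $h(A_j)h(A_{j+1})\cdots h(A_{j-1})\,h(A_j)$, which contains the conjugate $c_j=h(A_j)\cdots h(A_{j-1})$ of $u$ as a prefix. Because all $i$ fragment images are factors of $w$, the word $w$ contains each conjugate $c_0,\dots,c_{i-1}$, i.e.\ the entire conjugacy class of $u$, whose length is $|u|\ge i$. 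This is precisely the remark stated after Lemma~\ref{L:1}.

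With this in place, the case of $g_3(b_4)$ is immediate: an occurrence of $C_i$ with $i\ge3$ would produce a conjugacy class of length at least $i\ge3$, contradicting the second item of Lemma~\ref{L:1}. For $g_2(b_4)$ with $i\ge5$, the same argument against the first item of Lemma~\ref{L:1} (which forbids conjugacy classes of length at least $5$) applies verbatim. The hard part, and the only remaining case, is $i=4$ with the induced conjugacy class of length exactly $4$: then $|u|=4$ splits into four blocks $h(A_0),\dots,h(A_3)$, forcing each $|h(A_j)|=1$, so every variable is mapped to a single letter. Since $g_2(b_4)$ contains no occurrence of $C_4$ in which every variable has image of length $1$, this case is excluded as well. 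Hence $g_2(b_4)$ avoids $C_i$ for all $i\ge4$, which completes the proof of the corollary and thereby furnishes the upper bounds needed for Theorem~\ref{thm:circ}.
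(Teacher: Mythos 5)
Your proof is correct and follows essentially the same route as the paper: it combines Lemma~\ref{L:1} with the observation that an occurrence of $C_i$ forces a conjugacy class of length at least $i$, and it resolves the borderline case $C_4$ for $g_2(b_4)$ via the separate check that no occurrence exists with all variable images of length $1$, exactly as the paper does. Your explicit justification of the conjugacy-class observation and the careful case split at $i=4$ merely spell out details the paper leaves implicit.
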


%
%

\section{Remaining formulas in the $3$-avoidance basis}

In this section, we prove the following result which completes the determination of the avoidability index of the formulas in the $3$-avoidance basis.
\begin{theorem}~\label{thm:two}
$\lambda(ABCBA.CBABC)=2$. $\lambda(ABCA.CABC.BCB)=3$.
\end{theorem}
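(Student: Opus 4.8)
The statement splits into four claims. The two lower bounds of the form $\lambda(\cdot)\ge2$ are immediate: over a one-letter alphabet the only infinite word is $000\cdots$, and sending every variable to $0$ yields an occurrence of any formula, so no formula is $1$-avoidable. Hence the real content is threefold: (i) construct an infinite binary word avoiding $ABCBA.CBABC$; (ii) show that $ABCA.CABC.BCB$ is \emph{not} $2$-avoidable; and (iii) construct an infinite ternary word avoiding $ABCA.CABC.BCB$, improving Clark's bound of $4$ to $3$. Claims (i) and (iii) give the upper bounds, and (ii) gives the matching lower bound for the second formula.

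For the two upper-bound constructions (i) and (iii) I would follow the template of Lemma~\ref{L:1}. By computer search I would look for a uniform morphism $h$ and a convenient generating word $u$ --- for instance $b_4$, the Thue--Morse word, or the fixed point of some auxiliary morphism known to be sufficiently power-free --- such that $w=h(u)$ is a good candidate, the target alphabet having size $2$ for (i) and $3$ for (iii). To prove that $w$ contains no occurrence of the formula, I would argue by the lengths of the variable images. First I would establish a \emph{synchronization} property of $h$: any sufficiently long factor of $w$ decomposes uniquely, up to a bounded shift, into consecutive blocks $h(\texttt{0}),h(\texttt{1}),\dots$; this is checked by verifying that no $h(\alpha)$ occurs inside some $h(\beta\gamma)$ except as a prefix or a suffix, exactly as in the proof of Lemma~\ref{L:1}. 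Assuming an occurrence with long images, synchronization would force the images to be (shifted) $h$-images of factors of $u$, and the formula structure would then induce a forbidden configuration in the pre-image $u$, contradicting the chosen avoidance property of $u$. The finitely many remaining occurrences, those in which at least one image is short, I would settle by a bounded computer check over a window whose length is controlled by the uniformity of $h$.

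For the lower bound (ii) I would show that only finitely many finite binary words avoid $ABCA.CABC.BCB$, which immediately rules out any infinite avoiding word and gives $\lambda\ge3$. The tool is an exhaustive backtracking search: I would grow binary words letter by letter, and at each node test whether the current word already realizes an occurrence of the formula --- that is, whether there is a non-erasing assignment of $A$, $B$, $C$ to factors of the word making each of $ABCA$, $CABC$, and $BCB$ a factor --- pruning every branch as soon as an occurrence appears. It then suffices to verify that the search tree has bounded depth, so that the avoiding language is finite. Combined with the construction of (iii), this yields $\lambda(ABCA.CABC.BCB)=3$, completing Theorem~\ref{thm:two}.

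The main obstacle is the binary construction (i): binary words are extremely constrained, so finding a morphism $h$ whose image avoids a three-variable formula and simultaneously enjoys a clean synchronization property is delicate. The most error-prone point in all three parts is making the case analysis on image lengths watertight --- in particular the \emph{mixed} occurrences, where some variable images are long and others short, together with occurrences that straddle block boundaries of $h$, must be provably covered either by the synchronization reduction or by the finite check. Before committing to an explicit word for (i), I would also test whether $ABCBA.CBABC$ happens to be divisible by some binary formula already known to be $2$-avoidable, since by the divisibility bound $\lambda(f)\le\lambda(f')$ this would furnish the upper bound for free.
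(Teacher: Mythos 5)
Your decomposition of the theorem is correct, and your part (ii) is fine: the lower bound $\lambda(ABCA.CABC.BCB)\ge3$ is indeed a finite backtracking computation (the paper does not even spell it out), and the trivial bounds $\ge2$ are as you say. The genuine gap is in your plan for the upper bounds (i) and (iii), at exactly the point you flag but do not resolve: \emph{mixed} occurrences. Your case split is ``all images long, handled by synchronization'' versus ``at least one image short, handled by a bounded window check.'' But if, say, $|h(B)|$ is bounded while $|h(A)|$ and $|h(C)|$ are unbounded, the occurrence spans an arbitrarily long window, so no finite check covers it; and synchronizing only the long variables pulls back to the pre-image not the original formula but a contracted variant in which the short image is absorbed into bounded affixes at block boundaries. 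To conclude you would need, as a separate input, that the pre-image word avoids every such derived formula --- and nothing plays the role here that Clark's theorem (``$b_4$ avoids all $C_i$'') plays in Lemma~\ref{L:1}, where moreover the object pulled back is a whole conjugacy class, so individual variable lengths never enter. As written, your dichotomy simply does not cover all occurrences.

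The paper closes this hole with a different key lemma and no synchronization at all. It maps $\paren{\frac54^+}$-free words over $\Sigma_5$ through uniform morphisms ($m_{15}$, $15$-uniform onto $\Sigma_2$, and $m_6$, $6$-uniform onto $\Sigma_3$) and first proves, via Lemma 2.1 of \cite{Ochem2004} --- which reduces the verification to pre-images of length less than $60$, resp.\ $52$ --- that the images are $\paren{\frac{97}{75}^+,61}$-free, resp.\ $\paren{\frac{13}{10}^+,25}$-free. Then, writing $a=|h(A)|$, $b=|h(B)|$, $c=|h(C)|$ for a potential occurrence, each fragment is itself a repetition: $h(BAB)$ has period $a+b$, $h(ABCBA)$ has period $a+2b+c$, and so on. If $a+b\ge61$, the exponent bound yields the linear inequalities $53b\le22a$, $53b\le22c$, $53a\le22(2b+c)$, $53c\le22(a+2b)$, whose sum $53(a+2b+c)\le44(a+2b+c)$ is absurd; hence $a+b\le60$ and, by symmetry, $b+c\le60$. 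All image lengths are thus bounded \emph{simultaneously, regardless of which variables are short}, and one finite exhaustive check finishes --- this is precisely the tool that dispatches your mixed case. (A side benefit: there are exponentially many $\paren{\frac54^+}$-free words over $\Sigma_5$ \cite{KR2011}, hence exponentially many avoiding words.) Finally, your fallback for (i) is a dead end: every avoidable binary formula is divisible by $AA$ or $ABA.BAB$ (the $2$-avoidance basis), so by transitivity of divisibility a $2$-avoidable binary divisor of $ABCBA.CBABC$ would force $ABCBA.CBABC$ to be divisible by $AA$ or $ABA.BAB$; a direct inspection of the fragments $ABCBA$ and $CBABC$ (no square factor, and no consistent choice of $h$ making both $h(ABA)$ and $h(BAB)$ factors) rules both out.
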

Notice that $\lambda(ABCBA.CBABC)=2$ implies the well-known fact that $\lambda(ABABA)=2$.

For both formulas, we give a uniform morphism $m$ such that for every $\paren{\frac54^+}$-free word $w\in\Sigma_5^*$,
the word $m(w)$ avoids the formula. Since there exist exponentially many $\paren{\frac54^+}$-free words
over $\Sigma_5$~\cite{KR2011}, there exist exponentially many words avoiding the formula.
The proof that the formula is avoided follows the method in~\cite{Ochem2004}.


To avoid $ABCBA.CBABC$, we use this $15$-uniform morphism:
$$
\begin{array}{c}
m_{15}(\texttt{0})=\texttt{001111010010110}\\
m_{15}(\texttt{1})=\texttt{001110100101110}\\
m_{15}(\texttt{2})=\texttt{001101001011110}\\
m_{15}(\texttt{3})=\texttt{000111010001011}\\
m_{15}(\texttt{4})=\texttt{000110100001011}\\
\end{array}
$$
First, we show that the $m_{15}$-image of every $\paren{\frac54^+}$-free word $w$ is $\paren{\frac{97}{75}^+,61}$-free,
that is, $m_{15}(w)$ contains no repetition with period at least $61$ and exponent strictly greater than $\frac{97}{75}$.
By Lemma 2.1 in~\cite{Ochem2004}, it is sufficient to check this property for $\paren{\frac54^+}$-free word $w$ such that
$|w|<\frac{2\times\tfrac{97}{75}}{\tfrac{97}{75}-\tfrac54}<60$.
Consider a potential occurrence $h$ of $ABCBA.CBABC$ and write $a=|h(A)|$, $b=|h(B)|$, $c=|h(C)|$.
Suppose that $a+b\ge61$. The factor $h(BAB)$ is then a repetition with period $a+b\ge61$, so that its exponent satisfies
$\frac{a+2b}{a+b}\le\frac{97}{75}$. This gives $53b\le22a$. 
Similarly, $BCB$ implies $53b\le22c$, $ABCBA$ implies $53a\le22(2b+c)$, and $CBABC$ implies $53c\le22(a+2b)$.
Summing up these inequalities gives $53a+106b+53c\le44a+88b+44c$, which is a contradiction.
Thus, we have $a+b\le60$. By symmetry, we also have $b+c\le60$.
Using these inequalities, we check exhaustively that $h(w)$ contains no occurrence of $ABCBA.CBABC$.

To avoid $ABCA.CABC.BCB$ and its reverse $ABCA.BCAB.CBC$ simultaneously, we use this $6$-uniform morphism:
$$
\begin{array}{c}
m_6(\texttt{0})=\texttt{021210}\\
m_6(\texttt{1})=\texttt{012220}\\
m_6(\texttt{2})=\texttt{012111}\\
m_6(\texttt{3})=\texttt{002221}\\
m_6(\texttt{4})=\texttt{001112}\\
\end{array}
$$
We check that the $m_6$-image of every $\paren{\frac54^+}$-free word $w$ is $\paren{\frac{13}{10}^+,25}$-free.
By Lemma 2.1 in~\cite{Ochem2004}, it is sufficient to check this property for $\paren{\frac54^+}$-free word $w$ such that
$|w|<\frac{2\times\tfrac{13}{10}}{\tfrac{13}{10}-\tfrac54}=52$.

Let us consider the formula $ABCA.CABC.BCB$.
Suppose that $b+c\ge25$. Then $ABCA$ implies $7a\le3(b+c)$, $CABC$ implies $7c\le3(a+b)$, and $BCB$ implies $7b\le3c$.
Summing up these inequalities gives $7a+7b+7c\le3a+6b+6c$, which is a contradiction.
Thus $b+c\le24$. Suppose that $a\ge23$. Then $ABCA$ implies $a\le\frac37(b+c)\le\frac{72}{7}<23$, which is a contradiction.
Thus $a\le22$. For the formula $ABCA.BCAB.CBC$, the same argument holds except that the roles of $B$ and $C$ are switched,
so that we also obtain $b+c\le24$ and $a\le22$.
Then we check exhaustively that $h(w)$ contains no occurrence of $ABCA.CABC.BCB$ and no occurrence of $ABCA.BCAB.CBC$.

\section{Concluding remarks}\label{sec:con}
A major open question is whether there exist avoidable formulas with arbitrarily large avoidability index.
If such formulas exist, some of them necessarily belong to the $n$-avoidance basis for increasing values of $n$.
With the example of circular formulas, Clark noticed that belonging to the $n$-avoidance basis
and having many variables does not imply a large avoidability index.
Our results strengthen this remark and show that the $n$-avoidance basis contains a $2$-avoidable formula
on $t$ variables for every $3\le t\le n$.

Concerning conjugacy classes, we propose the following conjecture:
\begin{conjecture}
There exists an infinite word in $\Sigma_5^*$ that avoids every conjugacy class of length at least 2.
\end{conjecture}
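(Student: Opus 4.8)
The natural plan is to follow the template of Lemma~\ref{L:1} and Corollary~\ref{cor}: produce a uniform morphism $g_5:\Sigma_4^*\to\Sigma_5^*$ for which $g_5(b_4)$ avoids every conjugacy class of length at least $2$. Because $b_4$ already avoids every circular formula $C_i$, the whole task reduces to finding code images $g_5(\texttt{0}),\dots,g_5(\texttt{3})$ that are, first, \emph{synchronizing}, in the sense that whenever $g_5(\alpha)$ occurs inside $g_5(\beta\gamma)$ it occurs as a prefix or a suffix, and, second, free of short conjugacy classes. The first concrete step is thus a computer search over $k$-uniform morphisms, for small $k$, enforcing pairwise-distinct short prefixes and pairwise-distinct short suffixes of the four images so that any long factor desubstitutes uniquely.

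Once a candidate $g_5$ is fixed, the verification splits exactly as for $g_2$ in the proof of Lemma~\ref{L:1}. A finite computer check first confirms that $g_5(b_4)$ contains no conjugacy class of length $i$ for every $i$ up to a synchronization threshold of order $3k$. Then, for any hypothetical conjugacy class of length beyond that threshold, the synchronization property forces each conjugate to align with the codeword boundaries of $g_5$; desubstituting yields an occurrence of some circular formula $C_k$ in $b_4$, contradicting the fact (recalled above) that $b_4$ avoids every $C_i$. This collapses the infinite family of forbidden conjugacy classes into a single finite verification plus the alignment argument.

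I expect the genuine obstacle to be the existence of $g_5$ itself. The two requirements oppose each other: making the four images mutually synchronizing over only five letters leaves little room to also annihilate every short conjugacy class, so a brute-force search may well return nothing for the smallest values of $k$. A successful attack would likely need a larger uniform length, a non-uniform coding, or a base word other than $b_4$; it is even plausible that no morphic image of a backtracking word over five letters works and that a direct transfer-matrix or entropy estimate over $\Sigma_5$ is required. Should the conjecture instead be false, the problem inverts into a lower bound---proving that every infinite word over $\Sigma_5$ contains some conjugacy class of length at least $2$---for which the present method offers no leverage.
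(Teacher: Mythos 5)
There is a fundamental mismatch here: the statement you were asked to prove is stated in the paper as a \emph{conjecture} --- the authors explicitly leave it open and offer no proof. So there is no paper proof to compare against, and, more importantly, your text is not a proof either: it is a research plan. You propose to transplant the paper's own method (the one used in Lemma~\ref{L:1} for $g_2$, $g_3$, and especially $g_6$, whose image $g_6(b_4)$ avoids all conjugacy classes of length at least $2$ over \emph{six} letters) down to five letters, but you never exhibit the morphism $g_5$, never perform the finite verification, and you candidly admit that the existence of $g_5$ --- the entire content of the conjecture --- may fail under brute-force search. The two halves of your argument that are written out (synchronization forcing desubstitution along codeword boundaries, then pulling back a long conjugacy class to an occurrence of some $C_k$ in $b_4$) are correct reproductions of the paper's mechanism, but they are conditional on the missing object, so the gap is the whole theorem.

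One concrete point worth adding to your plan, since it shows how tight the target is: avoiding every conjugacy class of length exactly $2$ already forces $g_5(b_4)$ to contain no factor $xx$ for any letter $x$ (the class of $xx$ is a singleton), and for every pair of distinct letters $a,b\in\Sigma_5$ at most one of $ab$, $ba$ may occur as a factor. Thus the set of occurring length-$2$ factors must induce an orientation of (a subgraph of) the complete graph on five vertices, and similar rigid constraints cascade to lengths $3,4,\dots$ up to your synchronization threshold. Any candidate search should build these constraints in from the start rather than filtering afterwards; your own remark that ``the two requirements oppose each other'' is exactly right, and it is plausible --- as you note --- that no morphic image of $b_4$ works at all, in which case the method of Lemma~\ref{L:1} gives no leverage and the problem genuinely remains open, as the paper presents it.
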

Associated to the results in Lemma~\ref{L:1}, this would give the smallest alphabet that
allows to avoid every conjugacy class of length at least $i$, for every $i$.


\begin{thebibliography}{10}
\bibitem{BNT89} K.~A. Baker, G.~F. McNulty, and W.~Taylor.
\newblock Growth problems for avoidable words.
\newblock {\em Theoret. Comput. Sci.}, 69(3):319 -- 345, 1989.

\bibitem{BEM79} D.~R. Bean, A.~Ehrenfeucht, and G.~F. McNulty.
\newblock Avoidable patterns in strings of symbols.
\newblock {\em Pacific J. Math.}, 85:261--294, 1979.

\bibitem{Cassaigne1994} J.~Cassaigne.
\newblock {\em {Motifs \'evitables et r\'egularit\'e dans les mots.}}
\newblock PhD thesis, {Universit\'e Paris VI}, 1994.

\bibitem{Clark} R.~J. Clark.
\newblock {\em Avoidable formulas in combinatorics on words}.
\newblock PhD thesis, University of California, Los Angeles, 2001.
\newblock Available at\\ \url{http://www.lirmm.fr/~ochem/morphisms/clark_thesis.pdf}

\bibitem{KR2011} R.~Kolpakov and M.~Rao.
\newblock On the number of Dejean words over alphabets of 5, 6, 7, 8, 9 and 10 letters.
\newblock {\em Theoret. Comput. Sci.}, 412(46):6507--6516, 2011.

\bibitem{Lothaire2002} M.~Lothaire.
\newblock {\em Algebraic Combinatorics on Words}.
\newblock Cambridge Univ. Press, 2002.

\bibitem{Ochem2004} P.~Ochem.
\newblock A generator of morphisms for infinite words.
\newblock {\em RAIRO - Theor. Inform. Appl.}, 40:427--441, 2006.

\bibitem{O16} P.~Ochem.
\newblock Doubled patterns are 3-avoidable.
\newblock {\em Electron. J. Combin.}, 23(1), 2016.

\bibitem{OchemRosenfeld2016} P.~Ochem and M.~Rosenfeld.
\newblock Avoidability of formulas with two variables.
\newblock Proceedings of the 20th international Conference, 
{\em DLT} 2016, Montr\'eal, Lect. Notes Comput. Sci. 9840:344-354, S.~Brlek and C.~Reutenauer eds., 2016.
\newblock (see also \arxiv{1606.03955})

\bibitem{Thue06} A.~Thue.
\newblock {\"{U}ber unendliche {Z}eichenreihen}.
\newblock {\em Norske Vid. Selsk. Skr. I. Mat. Nat. Kl. Christiania}, 7:1--22, 1906.

\bibitem{Zimin} A.~I. Zimin.
\newblock Blocking sets of terms.
\newblock {\em Math. USSR Sbornik}, 47(2):353--364, 1984.
\end{thebibliography}


\end{document}